\def\asssym{\textrm{ASS-S}}
\def\ass{\textrm{ASS}}
\def\rep{\textrm{REP}}
\def\pop{\textrm{POP}}
\def\pophybrid{\textrm{POP2}}
\def\minimize{\mathrm{min:}}
\def\subjectto{\mathrm{s.t.}}
\title{New Integer Linear Programming Models for the Vertex Coloring Problem\footnote{This work was partially supported by the German Research Foundation, RTG 1855.}}
\titlerunning{New ILP Models for Vertex Coloring} %optional, in case that the title is too long; the running title should fit into the top page column
\author[1]{Adalat Jabrayilov}
\author[2]{Petra Mutzel}
\affil[1]{Department of Computer Science, TU Dortmund University, Germany\\
  \texttt{adalat.jabrayilov@cs.tu-dortmund.de}}
\affil[2]{Department of Computer Science, TU Dortmund University, Germany\\
  \texttt{petra.mutzel@cs.tu-dortmund.de}}
\authorrunning{A. Jabrayilov and P. Mutzel} %mandatory. First: Use abbreviated first/middle names. Second (only in severe cases): Use first author plus 'et. al.'
\subjclass{G.1.6 Optimization: Integer Programming, G.2.2 Graph Theory: Graph algorithms, F.2.2 Nonnumerical Algorithms and Problems: Computations on discrete structures}
\keywords{Graph coloring, vertex coloring problem, integer linear programming}% mandatory: Please provide 1-5 keywords
\begin{document}

\maketitle

\begin{abstract}
The vertex coloring problem asks for the minimum number of colors that can be assigned to the vertices of a given graph such that for all vertices $v$ the color of $v$ is different from the color of any of its neighbors. 
The problem is NP-hard. 
%Exact approaches based on integer linear programming can be classified into the natural assignment formulation and the column-generation based set covering formulation.
%Exact approaches based on integer linear programming can be classified into formulation 
%with polynomial size, i.e. with polynomial many constraints and variables and formulation with exponential size,
%wich the column-generation based set covering formulation.
Here, we introduce new integer linear programming formulations based on 
partial-ordering.
%It has the advantage that it is as simple to work with as the natural assignment formulation,
They have the advantage that they are as simple to work with as the classical assignment formulation,
since they can be fed directly into a standard integer linear programming solver.
%We evaluate our new model using Gurobi and show that our new simple approach dominates the assignment model and is a good alternative to the best state-of-the-art approaches for the vertex coloring problem.
We evaluate our new models using Gurobi and show that our new simple approach 
is a good alternative to the best state-of-the-art approaches for the vertex coloring problem.
In our computational experiments, we compare our formulations with the
classical assignment formulation and the representatives formulation
on a large set of benchmark graphs as well as randomly generated graphs of varying size and density.
%The evaluation shows that one of the new models dominates both formulations for sparse graphs,
%while the representatives formulation is the best for dense graphs.
The evaluation shows that the partial-ordering based models dominate both formulations for sparse graphs,
while the representatives formulation is the best for dense graphs.
%Furthermore, we are able to solve some of the benchmark instances in the literature for the first time to proven optimality.
%We introduce a new simple binary program for vertex coloring problem.
%  The formulation is very simple and can solve K of N instances 
%  from Benchmark \ldots
   \end{abstract}

\section{Introduction}
Graph coloring belongs to the classical optimization problems and has been studied for a long time.
We consider the \emph{vertex coloring problem} in which colors are assigned to the vertices of a graph such that no two adjacent vertices get the same color and the number of colors is minimized.
The minimum number of colors needed for a given graph is called its \emph{chromatic number} and 
denoted by $\chi$. 
Computing the chromatic number of a graph is NP-hard \cite{GarJoh79}.
Since the graph coloring problem has many applications, e.g., register allocation, scheduling, frequency assignment and timetabling, there is a vast amount of literature on this problem (see, e.g., \cite{Malaguti2010} for a survey).
However, in contrast to other classical combinatorial optimization problems such as the Travelling Salesman Problem, where large instances can be solved to optimality, this is not true for the vertex coloring problem. 
So far only relatively small vertex coloring instances 
%\red{(up to 100 vertices)} 
%\todo{rote stellen sollen wir lieber entfernen, da eine von ESA-reviewer hat gesagt, dass wir nix besonders gemacht haben, da 100 vertex auch vor uns geloest werden koennte }
can be solved 
%\red{consistently} 
to provable optimality by exact algorithms.

There are three main directions followed by exact algorithms, two of which are based on integer linear programming (ILP) formulations for the problem. The natural formulation introduces binary variables that assign colors to vertices.
A vertex $v$ in the graph $(V,E)$ is assigned color $i$ iff the corresponding binary variable $x_{vi}$ gets value 1. 
This \emph{assignment formulation} has the advantage that it is simple and easy to use. 
Since the number of variables and constraints of this ILP model is 
polynomial
%only of quadratic size 
in $|V|$,
it can be fed directly into a (commercial) integer linear programming solver such as \emph{SCIP} \cite{SCIP2008}, \emph{LP\_Solve} \cite{LPSOLVE}, \emph{Cplex}, or \emph{Gurobi}.
Due to the inherent symmetry in the model (the colors are not distinguishable) only small instances can be solved to optimality.
However, additional constraints can be added by which the symmetry can be reduced.
 The second approach has been suggested by Mehrotra and Trick \cite{MehTri96} and is based on the observation that each color class defines an independent set (no two vertices in the set are adjacent) in the graph. The variables correspond to independent sets and the ILP model searches for the minimum number of these independent sets that cover the graph.
 Since the number of independent sets can be of exponential size, the solution approach is based on column generation.
Solution algorithms based on this \emph{Set Covering Formulation} are of complex nature. 
Mehrotra and Trick \cite{MehTri96} suggest a branch-and-price algorithm for solving the ILP model.
Both formulations have been studied and improved by additional ideas in the literature leading to complex branch-and-cut algorithms using additional classes of constraints, special branching schemes, separation procedures, and special procedures for providing good upper bounds.
The computational studies in the literature show that none of the ILP models dominates the other one.

% TODO: Es gibt noch eine neue Formulierung für dichte Graphen TODO

ILP formulations based on 
partial-ordering
have shown to be practically successful in the area of graph drawing \cite{Jabrayilov2016}.
Here, we suggest a new ILP formulation based on 
partial-ordering
for the vertex coloring problem.
It has the advantage that it is as simple to work with as the natural assignment formulation,
since it is of polynomial size and can be fed directly into a standard integer linear programming solver.
We further suggest a hybrid ILP formulation which combines the advantages of the assignment formulation with those of the 
partial-ordering model.

We evaluate the new models using the ILP solver \emph{Gurobi} and show that our new simple approaches 
dominate the assignment formulation on the tested benchmark sets and are a good alternative to the best state-of-the-art approaches for the vertex coloring problem.
%\todo{lieber entfernen, sonst muessen wir genauer sagen, welche instanz haben zumerstenmal geloest}
%\red{Furthermore, we are able to solve some of the benchmark instances in the literature for the first time to proven optimality}.
%\red{The evaluation on DIMACS \cite{DIMACS:BENCHMARK} benchmark graphs shows
%that the partial-ordering based formulation seems to be best suited model for
%\texttt{GPIA} graphs, which are obtained from a matrix partitioning problem to determine sparse Jacobian 
%matrices.}
We also present the first experimental comparison with the representatives formulation which has been suggested by Campelo et al. \cite{Campelo2004,Campelo2008}. Since it introduces variables for every pair of non-adjacent vertices, this formulation seems to be advantageous for dense graphs. Our computational results support this observation.

\section{State-of-the-Art}\label{se:state-of-the-art}

Eppstein \cite{Epp03} has shown that it is possible to solve the vertex coloring problem by enumerating all maximal independent sets in the graph in time $O((4/3+3^{4/3}/4)^{|V|})$ which is about $2.4150^{|V|}$.
In practice, the successful approaches are much faster than that.
There are three main lines of research concerning exact solution methods for the coloring problem:
branch-and-bound approaches based on enumeration (Section \ref{sse:BaB}), the ILP-based assignment model (Section \ref{sse:ASS}), and the ILP-based set covering formulation (Section  \ref{sse:SC}).
%, also see \cite{Malaguti2010} for a survey. 
From those, the assignment model is the simplest one, since it can directly be fed into a standard ILP-solver. 
There are also a lot of experimental evaluations of these methods.
Altogether they have not shown a superiority of one of the three lines of research.
%It seems that there are always instances that can be solved better using one approach,
%and others that cannot be solved.
Another simple ILP formulation is the so-called representatives ILP model (see Section \ref{sse:REP}).
It seems that there is no experimental evaluation concerning this model.
In the following we describe the state-of-the-art concerning these exact approaches.
In the literature there also exist alternative ILP models and alternative approaches (e.g., based on Constraint Programming \cite{gualandi2012exact}). 
However, the aim of this work is to concentrate on simple ILP formulations that are competitive with the best state-of-the-art approaches. 
There is also a vast literature on heuristic approaches.
For a detailed overview of heuristic and exact approaches, see the survey by Malaguti and Toth \cite{Malaguti2010} and Burke et al. \cite{Burke2010}.

\subsection{Branch-and-bound based approaches}\label{sse:BaB}
The first line is based on branch-and-bound.
Brown \cite{Brown1972} suggested a backtracking algorithm that colors the vertices iteratively.
He developes some methods in order to reduce redundant partial solutions.
%\todo{Sie wollten doch Randal-Brown nur durch Brown ersetzen, oder?}
Br\'elaz \cite{Brelaz1979} has further developed the ideas of the Randall-Brown algorithm.
He suggested a heuristic based on the saturation degree of a vertex (the number of different colors to which a vertex is adjacent in a partially colored graph). His heuristic, called \emph{Dsatur algorithm}, is exact for bipartite graphs.
For solving the VCP for general graphs, he suggested to start with an initial coloring of a large clique and a preprocessing step of using his Dsatur heuristic. This approach is called \emph{DSATUR-based branch-and-bound} in the literature or in short \emph{DSATUR}.
%If the number of colors used equals the maximum clique number
Sewell \cite{Sew96} has further improved the approach by introducing a new tie breaking strategy.
Recently, Segundo \cite{Segundo2012} has suggested to apply this strategy only selectively thus improving the overall performance. He studied the behaviour of his new algorithm \emph{PASS} on a subset of the DIMACS color benchmark and random graphs.
%TODO: besser vermutlich: experiments erst am Ende.

\subsection{Assignment-based ILP model}\label{sse:ASS}
The classical ILP model for a graph $G=(V,E)$ is based on assigning color $i$ to vertex $v\in V$. 
For this, it introduces assignment variables $x_{vi}$ for each vertex $v$ and color $i$ ($i=1,\ldots,H$), with 
$x_{vi}=1$ if vertex $v$ is assigned to color $i$ and $x_{vi}=0$ otherwise. 
$H$ is an upper bound of the number of colors (e.g., the result of a heuristic)
and is at most $|V|$.
%In the worst case, $H$ is equal to $V$.
For modelling the objective function, additional binary variables $w_i$ are needed which get value 1 if and only if color $i$ is used in the coloring ($i=1,\ldots, H$).
%The model, called (\asssym) is given by:
The model is given by:
%
%\[\begin{array}{lrcllr}
%\textrm{(VPC-ASS)} &  \min \displaystyle\sum_{1 \le i \le H} y_{i} &   &  \label{vcp:ass} \\[5mm]
%&\textrm{s.t.}\quad     \displaystyle\sum_{i=1}^H x_{vi}   &=&  1	& \forall\ v \in V & \label{ass:unambiguous}\\
%&     x_{ui} + x_{vi}  & \le&  y_i	& \forall\ (u,v) \in E,\ 1 \le i \le H & \label{ass:edge:coloring}\\
%&    x_{vi}, y_{i}    &\in&   \{0,1\}	& \forall\ v \in V,\ 1 \le i \le H & \label{ass:binary}
%\end{array}\]
%
%\[
%\begin{array}{lrll}
%\textrm{(VCP-ASS)} &  \min \sum_{1 \le i \le H} y_{i} &   &  \label{vcp:ass} \\
%&  \subjectto\      \sum_{i=1}^H x_{vi}&=   1	& \forall v \in V \label{unambiguous}\\
%&     x_{ui} + x_{vi}   &\le   y_i& \forall (u,v) \in E,\ 1 \le i \le H \label{edge:coloring}\\
%&    x_{vi}, y_{i}    &\in    \{0,1\}& \forall v \in V,\ 1 \le i \le H 
%\end{array}
%\]
%
\begin{eqnarray}
%(\asssym) &  \min \sum_{1 \le i \le H} w_{i} &     \label{vcp:ass} \\
(\asssym)  &  \min \sum_{1 \le i \le H} w_{i} &     \label{vcp:ass} \\
 \subjectto\    &  \sum_{i=1}^H x_{vi}   =  1	& \forall v \in V \label{unambiguous}\\
&     x_{ui} + x_{vi}   \le  w_i	& \forall (u,v) \in E,\ i=1,\ldots,H \label{edge:coloring}\\
&    x_{vi}, w_{i}    \in   \{0,1\}	& \forall v \in V,\ i =1,\ldots, H 
\end{eqnarray}

  The objective function minimizes the number of used colors. Equation (\ref{unambiguous})
  ensures that each vertex receives exactly one color. For each edge there is a constraint (\ref{edge:coloring}) making sure that adjacent vertices receive different colors.
  This model has the advantage that it is simple and easy to use. It can be easily extended to generalizations and/or restricted variants of the graph coloring problem. 
%  Since the number of variables is only quadratic in the input size 
%  (it is bounded by $H(|V|+1)$) and the number of \red{constraints is cubic} (exactly $|V|+H|E|=O(|V||E|)$ constraints of type \ref{unambiguous} and \ref{edge:coloring}), it can directly be used as input for a standard ILP solver.
  Since the number of variables is quadratic in $|V|$
  (it is bounded by $H(|V|+1)$) and the number of 
  constraints is cubic in $|V|$ (exactly $|V|+H|E|=O(|V||E|)$ constraints of type \ref{unambiguous} and \ref{edge:coloring}), it can directly be used as input for a standard ILP solver.
  
%\green{
Malaguti and Toth mention two drawbacks of this model. The first one is the inherent symmetry in the ILP model, since 
there are $\binom H {\chi}$ possibilities
to select $\chi$ from $H$ colors
thus leading to exponentially (in the number of colors) 
many equivalent solutions.
Moreover, the continuous LP-relaxation is extremely weak, since it has a feasible solution of value 2 independent of the graph. It is possible to set the values of all vertices $v\in V$ as follows: $x_{v1}=x_{v2}=0.5$ and $x_{vj}=0$ for  $j=3,\ldots,H$, and $w_1=w_2=1$ and all other $w_i=0$.
%}
%
%Malaguti and Toth mention two drawbacks of this model. The first one is the inherent symmetry in the ILP model, since the colors are not distinguishable thus leading to exponentially (in the number of colors) many equivalent solutions.
%Moreover, the continuous LP-relaxation is extremely weak, since it has a feasible solution of value 2 independent of the graph. It is possible to set the values of all vertices $v\in V$ as follows: $x_{v1}=0.5$, $x_{v2}=0.5$ and $x_{vj}=0$ for all other $j=3,\ldots,H$, and the $w$-variables $w_1=1$, $w_2=1$ and all others to 0.
%Also the standard branching scheme in branch-and-bound or branch-and-cut approaches are not suited well for this model,
%since setting a variable $x_{vi}$ for a vertex $v$ and a color $i$ to zero does not change the solution value in many cases.
%
%Mendez-Diaz and Zabala \cite{MenZab06,MenZab08} studied the polytope of the set of feasible solutions and suggested many improvements to the ILP model.
%In order to overcome the first two mentioned problems with the (ASS) model, they suggest 
%%in cite{MenZab06}
%to add the following additional set of constraints 

%\green{
In order to overcome the symmetry problem with the (\asssym) model, 
Mendez-Diaz and Zabala \cite{MenZab06,MenZab08} suggest 
to add the following additional set of constraints:
%}
%
%\begin{eqnarray}
%&     y_{i} \ge y_{i+1}  	& \forall  i =1,\ldots, H-1
%\end{eqnarray}
\begin{align}
    & w_i \le \sum_{v\in V} x_{vi}	&   i =1,\ldots, H\label{mz:cp1}\\
    & w_{i} \le w_{i-1}			&   i =2,\ldots, H  \label{mz:cp2}
\end{align}
%
%These constraints ensure that the color $i$ is only assigned to some vertex, 
%if color $i-1$ is already assigned to another one.
%Moreover, they present several sets of constraints that arose from their studies of the polytope.
%In order to solve the new strengthened ILP model, they developed a branch-and-cut algorithm.
%Their algorithm starts with a preprocessing step that eliminates vertices which can be safely removed without changing the solution value (see Section \ref{sse:preprocessing}). The algorithm also uses the Dsatur algorithm (see Section \ref{sse:BaB}) in order to find an initial fesible solution.
%
%\green{
These constraints ensure that the color $i$ is only assigned to some vertex, 
if color $i-1$ is already assigned to another one.
We call the extended model (\ass):
\begin{eqnarray*}
(\ass) &  \min \sum_{1 \le i \le H} w_{i}     \\
 \subjectto\    &  (\ref{unambiguous})-(\ref{mz:cp2})
\end{eqnarray*}
Moreover, they present several sets of constraints that arose from their studies of the polytope.
In order to solve the new strengthened ILP model, they developed a branch-and-cut algorithm.
%}

%  _ __ ___ _ __  
% | '__/ _ \ '_ \ 
% | | |  __/ |_) |
% |_|  \___| .__/ 
%          |_|    
\subsection{Representatives ILP model}\label{sse:REP}
A vertex coloring divides the vertices into disjoint color classes.
Campelo et al. \cite{Campelo2004,Campelo2008} suggested a model in which each color class is represented by exactly one vertex. For this, they suggest to introduce a binary variable $x_{uv}$ for each non-adjacent pair of vertices $u,v\in V$ which is 1 if and only if the color of $v$ is represented by $u$. Additional binary variables $x_{u,u}$ indicate if $u$ is the representative of its color class. Let $\bar N(u)$ be the set of all vertices not adjacent to vertex $u$. The constraints are as follows:
\begin{eqnarray}
(\rep) &  \min \sum_{u \in V} x_{uu} &     \label{vcp:rep} \\
&    \sum_{u\in \bar N(v) \cup v} x_{uv}   \ge 1 	&  \forall v\in V \label{eq:rep1} \\
%&    \sum_{v\in \bar N(u) \cup u} x_{vu}   \ge 1 	&  \forall u\in V \label{eq:rep1} \\
&    x_{uv} + x_{uw}   \le x_{uu} 	& \forall u\in V,\ \forall  e=(v,w) \in G[\bar N(u)] \label{eq:rep2} \\
   & x_{uv} \in \{0,1\}			& \forall  \textrm{ non-adjacent vertex pairs } u,v \textrm{ or } u=v  \label{eq:rep3}
\end{eqnarray}

%\todo{Muss das ein Gleichheitszeichen in ineq. (8) sein?}
Inequality (\ref{eq:rep1}) requires that for any vertex $v\in V$, there must exist a color representative which can be $v$ itself or is in $\bar N(v)$. Inequality (\ref{eq:rep2}) states that a vertex $u$ cannot be the representative for both endpoints of an edge $(v,w)$ and that in the case that $x_{uv}=1$ ($u$ is the representative of vertex $v$) also the variable $x_{uu}$ must be 1. The advantage of this model is its simplicity and its compactness. 
It has exactly $|\bar E| + |V|$ variables and up to  $|V|+|V||E|$ many constraints, where $\bar E$ is the set of non-adjacent vertex pairs of $G=(V,E)$: $\bar E= (V \times V) \setminus E$. 
%\changeto{
%With growing density of the graphs, the size of the ILP model decreases, and for very dense graphs the set $\bar E$ is small.
%}
With growing density of the graphs, the number of constraints increases but the number of variables decreases and converges towards $|V|$.

In \cite{Campelo2008}, Campelo et al.\ mention that the symmetry in this model may lead to problems with branch-and-bound based solvers. The reason for this lies in the fact that within a color class any of the vertices in this class can be the representative. In order to circumvent this, the authors define an ordering on the vertices and require that in each color class only the vertex with the smallest number is allowed to be the representative of this class. However, the ILP model arising from this requirement, called the \emph{asymmetric representatives formulation}, has up to exponentially many constraints. So this model is no more simple to use. The authors \cite{Campelo2008} study the polytopes associated with both representative formulations. They suggest to add constraints based on cliques, odd-holes, and anti-holes, and wheels, and independent sets in order to strengthen the model. Moreover, they provide a comparison with the set covering based formulation (see Section \ref{sse:SC}).
In \cite{Campos2015}, Campos et al.\ study the asymmetric representatives formulation and the corresponding polytope. Their results lead to complete characterizations of the associated polytopes for some specific graph classes.
Up to our knowledge, no computational experiments have been published in the literature.

\subsection{Set covering based ILP model}\label{sse:SC}

  Mehrotra and Trick \cite{MehTri96} suggested the set-covering formulation which is based on the observation that the vertices receiving the same colors build an \emph{independent set}. A set of vertices is called \emph{independent set} if no two of its vertices are adjacent. 
  The aim of the formulation is to cover the vertices of the graph with the minimum number of independent sets. 
  Let $S$ a family of independent sets of the given graph $G=(V,E)$. 
 The ILP model  uses a binary variable $x_s$ for each independent set $s \in S$, with $x_s=1$ iff 
  the independent set $s$ is part of the cover. 
  %the vertices in $s$ receive same color.
%
\begin{eqnarray}
\textrm{(COV)}  &  \min   \sum_{ s \in S} x_{s}   \label{vcp:sc}    \\
  \subjectto\ 
  & \sum_{s\in S:v \in s} x_s \ge 1	& \forall v \in V \label{eq-sc1} \\
    & x_s \in \{0,1\}			& \forall s \in S \label{eq-sc2}
\end{eqnarray}
%
%\begin{align}
%  \minimize\ 
%    & \sum_{ s \in S} x_{s}  \tag{\sc}  \label{vcp:sc}    \\
%  \subjectto\ 
%  & \sum_{s\in S:v \in s} x_s \ge 1	& \forall v \in V \label{sc:covering} \\
%    & x_s \in \{0,1\}			& \forall s \in S 
%\end{align}
%
  The objective function minimizes the number of independent sets used for covering the vertices.
  Constraint (\ref{eq-sc1}) ensures that each vertex is covered 
  by at least one independent set.
  Since the number of variables in this model can be of exponential size, this formulation cannot
  be fed directly into a standard ILP-solver like SCIP, LP\_Solve, Gurobi or CPLEX. 
  
  Mehrotra and Trick have suggested a branch-and-price algorithm for solving the model which starts with a small number of variables and adds additional ones using a column generation approach.
  Malaguti et al. \cite{Malaguti2011} have provided further details such as metaheuristic algorithms for initialization and column generation as well as new branching schemes.
  Held et al. \cite{Held2012} have suggested new techniques in order to improve the numerical stability of the branch-and-price method. 
  A very similar formulation, the \emph{Set Packing} formulation, has been suggested by Hansen et al. \cite{Hansen2009}.
Theoretical and empirical results show that both models have a similar behavior.
%(\ref{vcp:sc})
%
%\todo{TODO Papers on coloring}

%Held: \cite{Held2012} %Held: safe lower bounds: \cite{held2011safe} ist das gleiche wie Held 2012!!!
%Malaguti et al exact: \cite{Malaguti2011}%exact
%Hansen: \cite{Hansen2009}
%\cite{MenZab01}  %nicht, aber 2006 und 2008
%Gualandi? \cite{gualandi2012exact}
%Eppstein: Theorie \cite{Epp03}

%\cite{Malaguti2010}%survey
%\cite{Sew96}

% {
% \def\hs{\hspace{8pt}} 
% \begin{align}
%   \minimize\ 
%     & 1 + \sum_{1 \le i \le H} x_{i,v_2}
%     \tag{\pop}
%     \label{vcp:pop}
%     \\
%   \subjectto\ 
%     & x_{i,u} - x_{i+1,u} \ge 0 & \forall u \in V,\ 1 \le i \le H-1 
%     &\hs // \ (\ref{function:1}) \notag
%     \\
%     & x_{i,u} + x_{u,{i+1}} = 1 & \forall u \in V,\ 1 \le i \le H-1  
%     &\hs // \ (\ref{function:2}) \notag
%     \\
%     & x_{i,u} + x_{u,i} + x_{i,v} + x_{v,i} \ge 1
%     &\forall uv \in E,\ 1 \le i \le H 
%     &\hs // \ (\ref{pop:color}) \notag
%     \\
%     & x_{v,1} = 0  &\forall v \in V  \label{col:lrange} 
%     \\
%     & x_{H,v} = 0   &\forall v \in V  \label{col:rrange} 
%     \\
%     & x_{1,v_1} = 0  \label{col:v1} 
%     \\
%     & x_{i,v_2} - x_{i,u} \ge 0 & \forall u \in V,\ 1 \le i \le H-1 
%     \label{col:chi} 
%     \\
%     & x_{ui}, x_{iu} \in \{0,1\} & \forall u \in V,\ 1 \le i \le H \notag 
% \end{align}
% }

%\section{Notation}
%Definitons xx nicht notwendig?

%  ____   ___  ____  
% |  _ \ / _ \|  _ \ 
% | |_) | | | | |_) |
% |  __/| |_| |  __/ 
% |_|    \___/|_|    
%                    
\section{Partial-Ordering Based ILP Models}

\subsection{A pure partial-ordering based ILP model: \pop}
\label{subsec:pop}

%General idea: xxx
Our new binary model considers the vertex coloring problem as 
partial-ordering
problem (POP).
We assume that the $H$ colors ($1,\ldots,H$) are linearly ordered.
Instead of directly assigning a color to the vertices, 
we determine a partial order
of the union of the vertex set and the set of ordered colors. 
%of the vertex set with respect to the ordered colors.
For this, we determine the relative order
of each vertex with respect to each color in the color ordering.
More specific: for every color $i$ and every vertex $v\in V$ our variables provide the information if $v$ is smaller or larger than $i$. 
We denote these relations by $v \prec i$ or $v \succ i$, resp.
In other words, the colors and the vertices build a partially ordered set in which all pairs of the form $(v,i)$ with $v\in V, i=1,\ldots,H$ are comparable.
%
%Our new binary model considers the vertex coloring problem as 
%partial ordering problem (POP) in the sense that
%if the colors are linearly ordered, then 
%the vertices with the larger color number are on the right side of the vertices
%with smaller color number in this ordering.
%different
%colors are ordered but with same color not.
%Assume we have $H$ linearly ordered colors $\{1,\cdots,H\}$. 
%and %a vertex $q$, 
%Each color $i$ can be seen as a virtual vertex with color $i$.
%, which represents the vertices with this color. %, i.e. $color(i):=i$.
%
%Let $q \in V$ be any vertex, 
%so that there is no vertex $v \in V$ on the right side of $q$ and the
%number $k:=\sum_{1\le i\le H} y_{i,q}$ 
%of virtual vertices on its left side is minimum.
%Then we have chromatic number $\chi=k+1$ and we say $q$ represents 
%the chromatic number.
%
%
We define the following POP variables:
\begin{gather*}
  \forall v \in V,\  i =1, \ldots, H: \hspace{12pt}
  y_{i,v} =
  \left\{
  \begin{array}[2]{ll}
%    1 & \mbox{$v$ is on the right side of $i$}\\
    1 & \mbox{$v \succ i$}\\
      0 & \mbox{otherwise.} 
  \end{array}
  \right.
  \\ 
  \forall v \in V,\  i = 1, \ldots, H: \hspace{12pt}
  z_{v,i} =
  \left\{
  \begin{array}[2]{ll}
%    1 & \mbox{$v$ is on the left side of $i$}\\
    1 & \mbox{$v \prec i$}\\
    0 & \mbox{otherwise.} 
  \end{array}
  \right.
\end{gather*}
If vertex $v$ has been assigned to color $i$, then $v$ is neither smaller nor larger than $i$ 
and we have $y_{i,v}=z_{v,i}=0$.
%\todo{Komma in y und z Variablen}
%
The connection with the assignment variables $x$ from the (ASS) model is as follows:
%The dependency between these variables and assignment variables $x$ is 
%following equation:
\begin{align}
  x_{vi} &= 1 - (y_{i,v}+z_{v,i}) & \forall v \in V,\  i = 1, \ldots, H
 \label{connection:pop-ass}
\end{align}
%
%We select an arbitrary vertex $q \in V$ to represent the chromatic number of the graph 
%and formulate our new binary program as follows:
We select an arbitrary vertex $q \in V$ 
and formulate our new binary program as follows:
%
%With this variables we formulate our new binary program VCP-POP:
%With POP variables we formulate our new binary program:
%
{
\def\hs{\hspace{8pt}} 
\begin{eqnarray}
(\pop) &  \min  1 + \sum_{1 \le i \le H} y_{i,q}    &   \\
  \subjectto\ 
    & z_{v,1} = 0  &\forall v \in V  \label{col:lrange} \\
    & y_{H,v} = 0   &\forall v \in V  \label{col:rrange}    \\
    & y_{i,v} - y_{i+1,v} \ge 0 & \forall v \in V,\  i =1,\ldots, H-1   
    \label{col:unambiguous1}   \\
    & y_{i,v} + z_{v,{i+1}} = 1 & \forall v \in V,\  i =1,\ldots, H-1   
    \label{col:unambiguous2} \\
    & y_{i,u} + z_{u,i} + y_{i,v} + z_{v,i} \ge 1 &\forall (u,v) \in E,\ i =1,\ldots, H  \label{col:edgecoloring}    \\
    & y_{i,q} - y_{i,v} \ge 0 & \forall v \in V,\  i =1,\ldots, H-1    \label{col:chi}     \\
    & y_{i,v}, z_{v,i} \in \{0,1\}  & \forall v \in V,\  i =1, \ldots, H
\end{eqnarray}
}
%
%
%\paragraph{Bedeutung}
%\begin{description}
%  The objective function minimizes the number $k$ of colors on the left side of the selected
%    vertex $q$ as we already discussed.
 % \item[(\ref{col:lrange}),(\ref{col:rrange})] 
  \begin{lemma}
  The integer linear programming formulation (POP) described above is correct: Any feasible solution of the ILP corresponds to a feasible vertex coloring and the value of the objective function corresponds to the chromatic number of $G$.
  \end{lemma}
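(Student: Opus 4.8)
The plan is to exhibit a correspondence between feasible solutions of (POP) and proper colorings of $G$ that use colors from $\{1,\ldots,H\}$, and to show that under this correspondence the objective value equals the largest color used. First I would reconstruct a coloring from a feasible POP solution. Fix a vertex $v$. Constraint (\ref{col:rrange}) gives $y_{H,v}=0$ and constraint (\ref{col:unambiguous1}) makes the sequence $y_{1,v},\ldots,y_{H,v}$ non-increasing, so it has the form $1,\ldots,1,0,\ldots,0$; I define the color of $v$ to be $c(v):=1+\sum_{i=1}^{H} y_{i,v}$, the index of the first zero. I would then read off the $z$-variables: combining (\ref{col:unambiguous2}) with (\ref{col:lrange}) shows $z_{v,i}=1-y_{i-1,v}$ for $i\ge 2$ and $z_{v,1}=0$, so the $y$ and $z$ variables for $v$ are completely determined by $c(v)$ and are mutually consistent with the interpretation $y_{i,v}=1\Leftrightarrow c(v)>i$ and $z_{v,i}=1\Leftrightarrow c(v)<i$.

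The key structural step is to verify that these variables encode exactly one color per vertex. Using the connection (\ref{connection:pop-ass}), I would first show $y_{i,v}+z_{v,i}\le 1$ for every $i$: for $i\ge 2$ this follows from $z_{v,i}=1-y_{i-1,v}$ together with the monotonicity $y_{i-1,v}\ge y_{i,v}$, and for $i=1$ from $z_{v,1}=0$. Hence each $x_{vi}=1-(y_{i,v}+z_{v,i})\in\{0,1\}$. A short counting identity then finishes the argument: summing $z_{v,i}=1-y_{i-1,v}$ over $i$ and using $y_{H,v}=0$ gives $\sum_i z_{v,i}=H-1-\sum_i y_{i,v}$, whence $\sum_i x_{vi}=H-\sum_i y_{i,v}-\sum_i z_{v,i}=1$. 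Thus each vertex receives exactly one color, reproducing constraint (\ref{unambiguous}) of (\ass). Properness is immediate from the edge constraint (\ref{col:edgecoloring}): rewriting it through (\ref{connection:pop-ass}) yields $x_{ui}+x_{vi}\le 1$ for every edge $(u,v)$ and every $i$, i.e.\ adjacent vertices never share a color.

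Next I would analyze the objective. Constraint (\ref{col:chi}) forces $y_{i,q}\ge y_{i,v}$ for all $v$ and all $i$ (the case $i=H$ being trivial by (\ref{col:rrange})), hence $c(q)=1+\sum_i y_{i,q}\ge 1+\sum_i y_{i,v}=c(v)$ for every $v$; so $q$ carries the largest color and the objective $1+\sum_i y_{i,q}$ equals $c(q)=\max_{v\in V} c(v)$. For any feasible solution this maximum is at least the number of distinct colors used, which is at least $\chi$, so the optimum is bounded below by $\chi$. For the matching upper bound I would take an optimal proper coloring with $\chi$ classes and relabel the classes so that the class of $q$ receives the largest label and the used labels are exactly $1,\ldots,\chi$; setting $y_{i,v}=1\Leftrightarrow c(v)>i$ and $z_{v,i}=1\Leftrightarrow c(v)<i$ then satisfies (\ref{col:lrange})--(\ref{col:chi}) by direct substitution and attains objective value $\chi$.

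I expect the main obstacle to be the objective argument rather than the (routine) encoding verification: one must notice that the fixed, arbitrary choice of $q$ is harmless precisely because color labels may be permuted freely, so that every coloring can be normalized to one in which $q$ attains the maximum color and the colors form an initial segment $1,\ldots,k$. This normalization is what ties $\max_{v\in V} c(v)$ down to $\chi$ and guarantees that feasibility is never lost through constraint (\ref{col:chi}).
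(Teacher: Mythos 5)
Your proof is correct, and its overall architecture matches the paper's: decode a color for each vertex from the $y$-variables, verify that each vertex receives exactly one color and that adjacent vertices differ, then use constraint (\ref{col:chi}) to argue that the objective equals the color of $q$, which is the maximum color used. The execution, however, differs in two substantive ways. For the exactly-one-color property, the paper argues by contradiction directly on the $y,z$-variables (assuming two colors $i \ne j$ with $y_{i,v}=z_{v,i}=y_{j,v}=z_{v,j}=0$ and propagating (\ref{col:unambiguous1})--(\ref{col:unambiguous2}) until it breaks), whereas you characterize each $y$-sequence as a monotone $0/1$-sequence, note that $z$ is determined by $y$, and then obtain $\sum_i x_{vi}=1$ via a counting identity through the linking equation (\ref{connection:pop-ass}) --- in effect reducing correctness of (\pop) to correctness of the assignment model by mapping every feasible (\pop) point to a feasible (\asssym) point. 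Second, your treatment of the objective is more complete than the paper's: the paper argues only informally that the objective ``indeed minimizes the number of assigned colors,'' which strictly speaking shows the objective equals $q$'s color, not that the optimum equals $\chi$ (a feasible solution could use colors with gaps). You close this by proving both bounds: the maximum color dominates the number of distinct colors used, hence the optimum is at least $\chi$; and any optimal coloring can be relabelled so that its colors form an initial segment with $q$ on top, giving a feasible solution of value $\chi$. This converse (coloring $\to$ feasible solution) direction, together with the observation that the arbitrary choice of $q$ is harmless because labels may be permuted, is genuinely needed for the lemma as stated, and the paper leaves it implicit. What the paper's route buys is brevity and a self-contained argument on the POP constraints alone; what yours buys is rigor on the optimality claim and a cleaner structural picture of the feasible region.
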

  
\begin{proof} 
All original vertices need to be embedded between the colors 1 and $H$. 
% Hence, there is no vertex $v\in V$ smaller than color vertex $1_v$, and no vertex larger than $H_v$.
    Constraints (\ref{col:lrange}) and (\ref{col:rrange}) take care of this.
    %i.e. only the colors from the interval $\{1,\cdots,H\}$ can be used.
%  \item[(\ref{col:unambiguous1}),(\ref{col:unambiguous2})].
 % The constraint (\ref{col:unambiguous1}) expresses that if 
  By transitivity, a vertex that is larger than color  $i+1$ is also 
  larger than $i$ (constraints (\ref{col:unambiguous1})).
 Constraint  (\ref{col:unambiguous2}) expresses that each vertex $v$ is either
  larger than $i$ (i.e. $y_{i,v}=1$) or smaller than $i+1$ and not both.
  These constraints jointly with constraints (\ref{col:unambiguous1}) 
    ensure that each vertex will be assigned to exactly one color, i.e.
    there is no color pair $i \ne j$ with 
    $y_{i,v}=z_{v,i}=0$ and 
    $y_{j,v}=z_{v,j}=0$. 
    We show this by contradiction. 
    Let $y_{i,v}=z_{v,i}=0$. % and $i<j$.
    % 
   %\begin{description}
%    In  the case that $j<i$,  because of $z_{v,i}=0$ 
%	we have 
%	$y_{j,v}=1$ for each $j < i$ by (\ref{col:unambiguous2})
%	which is a contradiction to $y_{j,v}=0$.
	%
	In  the case $j<i$,  because of $z_{v,i}=0$ 
	we have 
	$y_{i-1,v}=1$ by (\ref{col:unambiguous2}).
	Therefore we have
	$y_{j,v}=1$ for each $j \le i-1$ by (\ref{col:unambiguous1})
	which is a contradiction to $y_{j,v}=0$.  
      In the case $j>i$, 
	because of $y_{i,v}=0$ we have 
	$y_{k,v}=0$ for each $k \ge i$ by (\ref{col:unambiguous1}).
	Therefore we have
	$z_{v,k+1}=1$
	by
	(\ref{col:unambiguous2}) leading to
	%$x_{j,v}=0$ but 
	$z_{v,j}=1$ for each $j\ge i+1$ which is
	a contradiction to $z_{v,j}=0$.
%    \end{description}
    %
    % 
    % Because of $y_{i,v}=0$ we have 
    %$y_{k,v}=0$ for every $k \ge i$ by (\ref{col:unambiguous1})
    %and 
    %therefore
    %$z_{v,k+1}=1$
    %by
    %(\ref{col:unambiguous2}). Hence we have 
    %%$x_{j,v}=0$ but 
    %$z_{v,j}=1$
    %in contradiction to $z_{v,j}=0$.
    % 
  Constraint (\ref{col:edgecoloring}) prevents assigning the same color $i$ to two adjacent 
  vertices $u$ and $v$.
  Constraint (\ref{col:chi}) takes care of the fact that our chosen vertex $q$ will be
  assigned to the largest chosen color. So if $q$ is not larger than color $i$ then
 this will be true for all other vertices $v \in V\setminus\{ q\}$.
    Because of this constraint, the objective function indeed minimizes the number
  of assigned colors since it sums up the number of
  colors smaller than $q$. In order to get the number of chosen colors, we need to add one  for the color assigned to $q$.
  We say that $q$ \emph{represents the chromatic number} of $G$.
%  no vertex $v$ on the right side of $q$ in the ordering.
\end{proof}

\paragraph*{Comparison with the assignment model}

%PM: siehe ein paar Zeilen weiter
%\todo[inline]{
%The continuous LP-relaxation of POP is extremely weak too, since it has a feasible solution of value 1.5 independent of the graph. 
%It is possible to set the values of all vertices $v\in V$ as follows: 
%$y_{1,v}=z_{v,2}=0.5$, and $y_{j,v}=0$ and $z_{v,k}=1$ for all other 
%$j=2,\ldots,H$ 
%$k=3,\ldots,H$.  
%}

(\pop) has $2 \cdot H|V|$ 
binary variables and about $4|V|H+2|V|+|E|$ constraints. 
Notice that 
the equations 
(\ref{col:lrange}),
(\ref{col:rrange}) and 
(\ref{col:unambiguous2})
can be used to eliminate
$(H+1)|V|$ variables. 
Hence the reduced model %\ref{vcp:pop} 
has $(H-1)\cdot|V|$ variables,
while the classical assignment model (\ass) has $H (|V|+1)$ variables.

%\paragraph*{Branching rules}
Mendez-Diaz and Zabala \cite{MenZab06} mention that
the classical branching rule (to branch on fractional assignment variables by setting them to 1 in one subproblem and to 0
in another subproblem) produces quite unbalanced enumeration trees. 
This is the case because of setting $x_{v,i}=1$ implies $x_{v,j}=0$ for all $j\ne i$, while 
setting $x_{v,i}=0$ does not provide any further information.
The model (\pop) does not have this problem, since setting a \pop-variable
$y_{i,v}=0$ implies $y_{j,v}=0$ for all $j$ with  $j > i$
and setting $y_{i,v}=1$ implies $y_{j,v}=1$  for all $j$ with $j < i$
 because of constraint (\ref{col:unambiguous1}).

%Compared to the assignment model (\pop) have less symmetry because of the ordered colors.
%Note that in the assignment model even if the symmetry constraints (\ref{mz:cp1}) and  (\ref{mz:cp2}) have been added,
%the chosen colors are symmetric to each other. This is not the case in our model, where the colors are linearly ordered.

As already discussed, the original
(\asssym) model has inherent symmetries, which can be
resolved by additional constraints leading to the (\ass) model.
In the new (\pop) model, this type of symmetry does not occur.

Similarly as for the (ASS) model, the continuous LP-relaxation of the (POP) model is extremely weak, since it has a feasible solution of value 1.5 independent of the graph. It is possible to set the values of all vertices $v\in V$ as follows: 
$y_{1,v}=z_{v,2}=0.5$, and $y_{j,v}=0$ and $z_{v,k}=1$ for all other 
$j=2,\ldots,H$ and $k=3,\ldots,H$.

\subsection{A hybrid partial-ordering based ILP model: \pophybrid}\label{subsec:hybridMIP}
Our second ILP formulation is a slight modification of the first model and 
can be seen as a hybrid of the models (\pop) and (\ass).
It is the consequence of the observation that 
with growing density the (\pop) constraint matrix contains
more nonzero elements than the (\ass) constraint matrix.
This is due to the constraints (\ref{col:edgecoloring}), 
which are responsible for the valid coloring of 
adjacent vertices, and contain twice as many 
nonzero coefficients as the
corresponding (\ass) constraints (\ref{edge:coloring}).
Therefore, we use  equation (\ref{connection:pop-ass}) to substitute (\ref{col:edgecoloring}) by  (\ref{connection:pop-ass}) and the following constraints:
\begin{align}
  x_{ui} + x_{vi}   & \le  1  & \forall (u,v) \in E,\ i =1,\ldots, H.
  \label{pophybrid:edge:coloring}
\end{align}
This reduces the number of 
nonzero coefficients 
from $4 |E| H $ to
$2 |E| H + 3 |V|H $ giving a  reduction ratio of about two in dense
graphs.
Although we added $|V|H $ new assignment variables, the dimension of the
problem remains unchanged, since the new variables directly depend on the
\pop-variables by equality (\ref{connection:pop-ass}).

%\todo{noch ToDo PM}

%\paragraph{Some more details:}
%\subsection*{Comparison with \ass:}

%\paragraph*{Number of Variables}
%\pop\ has $2 \cdot H|V|$ binary variables. 
%Note that
%the equation (\ref{col:unambiguous2}),
%(\ref{col:lrange}), (\ref{col:rrange}) 
%can be used to eliminate
%$(H+1)|V|$ variables. 
%Hence the reduced model %\ref{vcp:pop} 
%has $(H-1)\cdot|V|$ variables,
%while \ass\ has $H \cdot (|V|+1)$ variables.

%\paragraph*{Symmetry}
%As we already told above,
%without the additional constraints \ass\ has more symmetry than \pop, since 
%there are $\binom H {\chi}$ possibility
%to select $\chi$ from $H$ colors. In opposite to \ass\ the model \pop\ 
%select only first $\chi$ colors. 
%%To break this symmetry 
%%\cite{MenZab06} has suggests to add following two constraints:
%%\begin{align}
%%    & y_{i} \ge y_{i+1}			&  1 \le i \le H-1  \\
%%    & y_i \le \sum_{v\in V} x_{vi}	&  1 \le i \le H
%%\end{align}

\section{Computational Experiments}

%\todo[inline]{
%
%TODO: 
%
%TODO: 
%
%TODO: 
%}

In our computational experiments we are interested in answering the following questions:
\begin{itemize}
\item (H1): Do our new 
partial-ordering
based ILP formulations dominate the classical assignment ILP model (ASS) on a set of benchmark instances? 
\item (H2): Does one of the two 
%(POP)-based 
partial-ordering based
models dominate the other one?
\item (H3): How do the simple models behave compared to the 
%\todo{das wort best wuerde ich loeschen, MZ ist ja nicht der beste}
%\red{best} 
state-of-the-art algorithms on a benchmark set of graphs?
%in terms of running times and number of solved instances?
%\item How do our new models compare to both of the state-of-the-art models (ASS) and (REP)? 
%Is it a good alternative on a benchmark set of graphs from the literature?
\item (H4): Does the model (REP) dominate the other approaches on dense instances? %a set of random graphs of varying density?  
\end{itemize}
%
%\todo{Entfernen wegen 12 Seite:
%We have computed the simple ILP models (ASS), (POP), (POP2), and (REP) 
%and used a 
%standard commercial ILP-solver to solve a set of benchmark instances from the literature.}

%In order to answer (H1), we generated both simple ILP models, 
%and used a standard commercial ILP-solver to solve a set of benchmark instances from the literature.
%For (H2) and (H3), we compare our running times with the times in the literature. For this, we xxx

\subsection{Algorithms and implementation issues}
\label{sse:preprocessing}

%\paragraph*{Preprocessing}\label{sse:preprocessing}
%TODO zu Algorithm Preprocessing: 
%We have used the following preprocessing operations.

The preprocessing techniques (a)-(d) are widely used 
%\todo{in the literature
%}
%There are some widely used preprocessing techniques in the literature 
(e.g., 
\cite{gualandi2012exact,Hansen2009,Malaguti2010,MenZab06,MenZab08,Malaguti2011}):
%
%\cite{Sew96}).
%\todo{Hier mehr Literatur zitieren}
\begin{enumerate}
  \item [(a)]
A vertex $u$ is \emph{dominated} by a vertex $v$, $v\not= u$, if the neighborhood of $u$ is a subset of the neighborhood of $v$. In this case, the vertex $u$ can be deleted from $G$, the remaining graph can be colored, and at the end, $u$ can get the same color as $v$.

  \item [(b)] To reduce the number of variables we are interested in getting a small upper bound $H$ for the number of needed colors.
  %This  on possibly small     $H$. Any upper bound is valid value for $H$. 

  \item [(c)]Since any clique represents a valid lower bound for the vertex coloring problem
    one can select a clique and precolor it. This removes some variables, too.

  \item [(d)] In the case of equal lower and upper bounds 
    the optimal value has been found,
  hence no ILP needs to be solved. 

\end{enumerate}

We extended (c) as follows:
\begin{enumerate}
  \item [(e)] 
    In (\ass), (\pop), and (\pop2) 
    we can fix more variables if we try to find the clique $Q$ with
    %possible
    %big edge set 
    $\max ( |Q|H + |\delta(Q)| )$, 
    where $\delta(Q):=\{(u,v) \in E \colon |\{u,v\} \cap Q| = 1\}$.
    The first term $|Q|H$ is due to the fact that we can fix $H$ variables for each vertex in
    $Q$.
    %Once $Q$ is precolored. Let $v \in Q$ have color $i$. 
    After precoloring all the 
    vertices $u \in Q$, the neighbors $v$ of $u$ cannot receive the same color
    as $u$. For example, if the assignment variable $x_{ui}=1$ then
    $x_{vi}=0$.
    Hence we can fix one variable for each edge $(u,v)\in\delta(Q)$.
\end{enumerate}

%\paragraph*{Precoloring and selecting the vertex $q$ in \pop}
To represent the chromatic number in (\pop) and (\pophybrid),
we pick any vertex $q$ from the clique $Q$ found in the preprocessing.
The remaining vertices from the clique are precolored with colors $1,\cdots,|Q|-1$.

%After selecting one maximal clique 
%$Q=\{q_1,\cdots,q_{k}\}$ we precolor 
%$\{q_1,\cdots,q_{k-1}\}$ with colors $1,\cdots,k-1$ and the remaining 
%vertex $q_{k}$ use
%  Each vertex in 
%    $q_i \in \{q_1,\cdots,q_{k-1}\}=Q\setminus\{q\}$ 
%    receive the color $i$.
%
%%To have a small amount of variables we select $H$ as little as possible.
%%$H$ can be any upper bound for vertex coloring problem.
%%Dabei ist $H$ ein oberer Schrank, der mit einer Greedy Heuristik berechnet werden
%%kann. Für chromatische Zahl $\chi$ gilt also $\chi \in \{1\cdots H\}$.
%As already suggested by \cite{Sew96} 
%we can also select any maximal
%clique and precolor it, and remove all corresponding variables.
%\todo[inline]{
%In VCP kann man einen beliebigen Clique $Q=\{q_1,\cdots,q_k\}$ auswählen und 
%all ihren Knoten verschiedene Farben fest zuordnen.
%}
%  %(\ref{col:Q-qk}) 
%  Each vertex in 
%    $q_i \in \{q_1,\cdots,q_{k-1}\}=Q\setminus\{q\}$ 
%    receive the color $i$.
%
%We select any vertex $q_k \in Q$ to represent the chromatic number, i.e. $q:=q_{k}$.

%\paragraph{Implementation}
%For comparison we present the results of 
We have implemented the simple MIP models (ASS), (POP), (POP2), (REP) using the Gurobi-python API.
The source codes are available on our benchmark site \cite{POP:BENCHMARK}.
%
%All implementations use the preprocessings (a) -- (d).
%The model (\rep) does not need (b) directly, but indirectly due to (d).
%Since (e) can reduce the number of assignment and partial ordering variables, we implemented 
%(\ass), (\pop)  and (\pophybrid) also with (e) instead of (c).
As already mentioned in subsection \ref{subsec:pop},
in our implementation of (\pop) and (\pophybrid)
we used the equations 
(\ref{col:lrange}),
(\ref{col:rrange}) and 
(\ref{col:unambiguous2})
and eliminated all $z$ variables. 
%
%For (\pop) and (\pophybrid) it turned out that (e) is better than (c).
%%Due to space restrictions we only present the results for both models of (e) only.
%Due to space restrictions we present the results of (\pop) and (\pophybrid) 
%with (e) only.
% this two models only corresponding to (e).
%\begin{description}
%  \item[pop:] this is \pop\ with preprocessings (a), (b), (d), (e)
%  \item[] \pophybrid\ with preprocessings (a), (b), (d), (e)
%  \item[] \ass\ with preprocessings (a), (b), (d), (c)
%  \item[] \ass\ with preprocessings (a), (b), (d), (e)
%  \item[] \rep\ with preprocessings (a), (b), (d), (c)
%\end{description}

To compute (b), (c) and (e) we used the
python library \url{http://networkx.readthedocs.io}.
%\texttt{networkx} \cite{NETWORKX}.
For (b) we used the function \texttt{networkx.greedy\_color()}.
%from this library.
%from \texttt{networkx}. %, which is implementation of one heuristic .
To get %one maximum 
a clique 
for (c) and (e) 
we applied the randomised function \texttt{networkx.maximal\_independent\_set()} 
on the complement graph, % of $G$, %=(V,E)$,
where the complement graph 
%$G'=(V, E')$ is called the complement 
%graph of $G$, if it has same vertex set as $G$ but 
%contains an edge $(u,v)$ for each pair $u,v$ iff $G$ does not contains this edge.
$G'$ of $G$ has the same vertices %vertex set 
as $G$ but 
contains an edge $(u,v)$ for each vertex pair $u,v$ iff $G$ does not have %contain 
this edge.
%the edge $(u,v)$ is in $E$ iff  has an edge f
%as follows:
Since this function is randomised,
we iterated it 
%this function 
at most $300 \cdot \frac{|E|}{|V|}$ times within a maximum time of 60 s
%seconds 
and selected the best one.

%TODO implementation of which algortihms are this used PYTHON functions ?

%\todo[inline]{NOT DONE: Other preprocessing techniques used: all vertices with degree smaller than $\omega$ removed? --- Haben Sie das gemacht? Kann man das in einem Satz hierhin schreiben?}

\subsection{Test setup and benchmark set of graphs}
%For solving the ILP models, we used Gurobi version xxx.
%The experiments were performed single-threadedly on an Ixxxx.

For solving the ILP models, we used the Gurobi version 6.5 single-threadedly.
Due to the large benchmark set the experiments were performed on two computers:
 \begin{description}
   \item[M1] Intel Core i7-4790, 3.6GHz, with 32 GB of memory and running Ubuntu Linux 14.04. 
     (Benchmarks \cite{DIMACS:dfmax} user time: r500.5=4.14 s)

   \item[M2] Intel Xeon E5-2640, 2.60GHz, with 128 GB of memory and running Ubuntu Linux 14.04. 
     (Benchmarks \cite{DIMACS:dfmax} user time: r500.5=5.54 s)
 \end{description}

%For solving the MIPs, we used Gurobi 6.5 with a single thread.

%Software, Server, Gurobi, Benchmark set

%
%In the literature, subsets of the DIMACS benchmark sets and randomly generated graphs have been used.
%We used two benchmark sets, too. %of graphs.
%Both benchmark sets as well as our codes are available at \cite{POP:BENCHMARK}.
%%\url{https://ls11-www.cs.tu-dortmund.de/mutzel/colorbenchmarks}.
%As first test set we have chosen the hard instances from 
%%contains 68 of 
%119 DIMACS \cite{DIMACS:BENCHMARK} instances 
%%which is hard instances according to 
%according to the Google benchmark site \cite{GOOGLE:BENCHMARK}.
%% \url{https://sites.google.com/site/graphcoloring/vertex-coloring}.
%These 68 graphs have between $79$ and $10\,000$ vertices and 
%between $156$ and $990\,000$ edges. 
%  We compared the partial-ordering based approaches with 
%  (\ass) and (\rep) on the DIMACS instances on computer M1.
%  %We compared the (\pop) and (\pophybrid) %partial ordering based approaches
%  %besides (\ass) and (\rep) also with the state-of-the-art algorithms 
%  Furthermore, the new approaches are also compared 
%  with the state-of-the-art algorithms \cite{MenZab06,MenZab08,Malaguti2011}, 
%  which are of complex nature.
%  They used the same DIMACS benchmark set and the same preprocessing techniques (a)-(d).
%  %
%%
%\red{
In the literature, subsets of the DIMACS 
\cite{DIMACS:BENCHMARK} 
benchmark sets and randomly generated graphs have been used.
%We used two benchmark sets, too. %of graphs.
%Both benchmark sets %as well as our codes 
We used also two sets, which are available at \cite{POP:BENCHMARK}.
%\url{https://ls11-www.cs.tu-dortmund.de/mutzel/colorbenchmarks}.
From 119 DIMACS graphs we have chosen
the hard instances 
according to the Google benchmark site \cite{GOOGLE:BENCHMARK} 
and the 
\texttt{GPIA} graphs, which are obtained from a matrix partitioning problem to determine sparse Jacobian matrices.
The DIMACS instances are evaluated on computer M1.
  %We compared the (\pop) and (\pophybrid) %partial ordering based approaches
  %besides (\ass) and (\rep) also with the state-of-the-art algorithms 
  Furthermore, the new approaches are also compared 
  with the state-of-the-art algorithms \cite{MenZab06,MenZab08,Malaguti2011}
  which are of complex nature.
  They used the same DIMACS benchmark set and the same preprocessing techniques (a)-(d).
%%}
%  %from literature for which extensive computational results exists.  
%Table 
%\ref{tab:resultsDIMACS} 
%contains 36 of the 68 hard instances, which can be solved by at least 
%one of the considered algorithms.
%%See Table 
%%\ref{tab:resultsDIMACS}
%%for detailled information on those instances.
%%
%
%TODO: Their edge density $\frac{2|E|}{|V|(|V|-1))}$ 
%is about $0.19$ on average but varies between $0.007$ and $0.972$.
%
%\begin{table}[b]
%{\small
%\scalebox{0.63}{
%  \begin{tabular}{ll|ccc|ccc|l}
%    \hline
%    \input{eval/reduction.tex} 
%    \hline
%  \end{tabular}
%}
%}
% \caption{Used DIMACS benchmark instances: original size and size after preprocessing.}
%  \label{tab:benchmarkinstances}
%\end{table}
%\clearpage

%\rotatebox{90}{
%\begin{minipage}{\textheight}
%\begin{sidewaystable}
\begin{table}[ht]
{
\small
%\scalebox{0.6}{
%  \setlength{\tabcolsep}{1.6pt} 
\scalebox{0.63}{
  \setlength{\tabcolsep}{2.0pt} 
  \begin{tabular}{l|rr|l|ccc|ccc|ccc|ccc|ccc|cc|c}
    \hline
      	         &      &        &       &           &\rep      &                &    &\pop      &               &         &\pophybrid   &                 &           &\ass+(c) &                 &           &\ass+(e) &                  &\cite{MenZab06}   &\cite{MenZab08}   &\cite{Malaguti2011}    \\
instance         &$|V|$ &$|E|$   &class  &lb         &ub        &time            &lb  &ub        &time           &lb       &ub           &time             &lb         &ub       &time             &lb         &ub       &time              &time  &time  &time     \\
\hline                                                                                                                                                                                              
1-FullIns\_4     &93    &593     &NP-m   &5          &5         &1.85            &5   &5         &0.01           &5        &5          &0.01             &5         &5         &0.01             &5        &5      &0.01              &0.1   &      &tl       \\
1-FullIns\_5     &282   &3247    &NP-?   &5          &6         &tl              &6   &6         &6.01           &6        &6          &1.54             &6         &6         &1.82		    &6        &6      &2.12		 &tl    &      &tl       \\
2-FullIns\_4     &212   &1621    &NP-m   &6          &6         &2.14            &6   &6         &0.02           &6        &6          &0.01             &6         &6         &0.01             &6        &6      &0.01              &tl    &4     &tl       \\
2-FullIns\_5     &852   &12201   &NP-?   &5          &7         &tl              &7   &7         &\textbf{5.02}  &7        &7          &72.45		 &7         &7         &326.61	    &7        &7      &14.74		 &tl    &      &tl       \\
3-FullIns\_3     &80    &346     &NP-m   &6          &6         &0.01            &6   &6         &0.00           &6        &6          &0.00             &6         &6         &0.00             &6        &6      &0.00              &0.1   &      &2.9      \\
3-FullIns\_4     &405   &3524    &NP-?   &7          &7         &3.92	         &7   &7         &0.03	         &7        &7          &0.02		 &7         &7         &0.02	       	    &7        &7      &0.03		 &tl    &      &tl       \\
3-FullIns\_5     &2030  &33751   &-      &7          &8         &tl              &8   &8         &\textbf{12.43} &8        &8          &32.90            &8         &8         &3489.97          &8        &8      &27.23             &tl    &      &tl       \\
4-FullIns\_3     &114   &541     &NP-m   &7          &7         &0.01            &7   &7         &0.00           &7        &7          &0.00             &7         &7         &0.00             &7        &7      &0.00              &3     &      &3.4      \\
4-FullIns\_4     &690   &6650    &NP-?   &8          &8         &4.61		 &8   &8         &0.05		 &8        &8          &0.02		 &8         &8         &0.03		    &8        &8      &0.02	         &tl    &      &tl       \\
4-FullIns\_5     &4146  &77305   &-      &7          &9         &tl              &9   &9         &\textbf{9.67}  &9        &9          &16.03            &8         &9         &tl               &9        &9      &78.17             &tl    &      &tl       \\
4-Insertions\_3  &79    &156     &NP-m   &3          &4         &tl              &4   &4         &\textbf{9.62}  &4        &4          &15.75            &4         &4         &141.48           &4        &4      &54.04             &4204  &      &tl       \\
5-FullIns\_3     &154   &792     &NP-m   &8          &8         &0.01            &8   &8         &0.00           &8        &8          &0.00             &8         &8         &0.00             &8        &8      &0.00              &20    &      &4.6      \\
5-FullIns\_4     &1085  &11395   &NP-?   &9          &9         &12.36		 &9   &9         &0.05		 &9        &9          &0.05		 &9         &9         &0.04		    &9        &9      &0.04		 &tl    &      &tl       \\
ash608GPIA       &1216  &7844    &NP-m   &-$\infty$  &1215      &tl              &4   &4         &\textbf{34.84} &4        &4          &51.63            &4         &4         &575.23           &4        &4      &821.74            &692   &      &2814.8   \\
ash958GPIA       &1916  &12506   &NP-m   &-$\infty$  &+$\infty$ &tl              &4   &4         &\textbf{90.11} &4        &4          &105.77           &4         &6         &tl               &4        &8      &tl                &tl    &4236  &tl       \\
DSJC125.5        &125   &3891    &NP-h   &14         &21        &tl              &11  &20        &tl             &13       &22         &tl               &13        &21        &tl               &13       &21     &tl                &tl    &      &18050.8  \\
DSJC125.9        &125   &6961    &NP-h   &44         &44        &\textbf{1.72}   &36  &50        &tl             &42       &44         &tl               &42        &45        &tl               &42       &45     &tl                &tl    &      &3896.9   \\
DSJR500.1c       &500   &121275  &NP-h   &85         &85        &\textbf{0.33}   &77  &+$\infty$ &tl             &83       &86         &tl               &78        &+$\infty$ &tl               &78       &+$\infty$        &tl                &tl    &      &288.5    \\
DSJR500.5        &500   &58862   &NP-h   &122        &497       &tl              &115 &+$\infty$ &tl             &122      &122        &\textbf{572.01}  &122       &122       &1748.11          &115      &+$\infty$        &tl                &tl    &      &342.2    \\
le450\_15a       &450   &8168    &NP-m   &15         &449       &tl              &15  &16        &tl             &15       &15         &\textbf{598.55}  &15        &15        &2439.49          &15       &15     &801.94            &tl    &      &0.4      \\
le450\_15b       &450   &8169    &NP-?   &15         &446       &tl              &15  &15        &2939.49	 &15       &15         &\textbf{700.50}  &15        &15        &1393.29	         &15       &15     &1103.29		 &tl    &      &0.2      \\
le450\_15c       &450   &16680   &NP-?   &-$\infty$  &450       &tl              &15  &+$\infty$ &tl             &15       &+$\infty$  &tl               &15        &+$\infty$ &tl               &15       &25     &tl                &tl    &      &3.1      \\
le450\_15d       &450   &16750   &NP-?   &-$\infty$  &450       &tl              &15  &26        &tl             &15       &26         &tl               &15        &+$\infty$ &tl               &15       &+$\infty$        &tl                &tl    &      &3.8      \\
le450\_25c       &450   &17343   &NP-?   &25         &450       &tl              &25  &30        &tl             &25       &31         &tl               &25        &+$\infty$ &tl               &25       &+$\infty$        &tl                &tl    &      &1356.6   \\
le450\_25d       &450   &17425   &NP-?   &25         &450       &tl              &25  &30        &tl             &25       &31         &tl               &25        &+$\infty$ &tl               &25       &+$\infty$        &tl                &tl    &      &66.6     \\
le450\_5a        &450   &5714    &NP-?   &-$\infty$  &450       &tl              &5   &9         &tl             &5        &5          &\textbf{21.17}   &5         &5         &83.65            &5        &5      &52.03    &tl    &      &0.3      \\
le450\_5b        &450   &5734    &NP-?   &-$\infty$  &450       &tl              &5   &7         &tl             &5        &5          &\textbf{140.16}  &5         &5         &503.29           &5        &5      &168.67   &tl    &      &0.2      \\
mug100\_1        &100   &166     &NP-m   &4          &4         &1.14            &4   &4         &0.24           &4        &4          &0.09             &4         &4         &0.39             &4        &4      &0.13              &60    &      &14.4     \\
mug100\_25       &100   &166     &NP-m   &4          &4         &1.10            &4   &4         &0.45           &4        &4          &0.31             &4         &4         &0.31             &4        &4      &0.31              &60    &      &12       \\
qg.order40       &1600  &62400   &NP-m   &-$\infty$  &+$\infty$ &tl              &40  &45        &tl             &40       &40         &\textbf{534.83}  &40        &46        &tl               &40       &46     &tl                &tl    &      &2.9      \\
qg.order60       &3600  &212400  &NP-?   &-$\infty$  &+$\infty$ &tl              &60  &68        &tl             &60       &62         &tl               &-$\infty$ &68        &tl               &-$\infty$  &68     &tl                &tl    &      &3.8      \\
queen10\_10      &100   &1470    &NP-h   &10         &12        &tl              &10  &12        &tl             &10       &12         &tl               &10        &12        &tl               &10       &12     &tl                &tl    &      &686.9    \\
queen11\_11      &121   &1980    &NP-h   &11         &13        &tl              &11  &13        &tl             &11       &13         &tl               &11        &13        &tl               &11       &13     &tl                &tl    &      &1865.7   \\
school1\_nsh     &352   &14612   &NP-m   &14         &14        &981.45          &14  &14        &22.39          &14       &14         &\textbf{12.76}   &14        &14        &31.23            &14       &14     &28.22             &0  &      &17       \\
wap05a           &905   &43081   &NP-m   &-$\infty$  &+$\infty$ &tl              &40  &+$\infty$ &tl             &50       &50         &1308.37          &50        &50        &\textbf{125.45}  &41       &+$\infty$        &tl                &tl    &      &293.2    \\
wap06a           &947   &43571   &NP-?   &-$\infty$  &+$\infty$ &tl              &40  &+$\infty$ &tl             &40       &+$\infty$  &tl		 &40        &+$\infty$ &tl               &40       &+$\infty$        &tl		     &tl    &      &175      \\
\hline                                                                                                                                                                                              
solved:          &      &        &       &           &       &13                 &      &       &19              &         &         &25         &                 &          &22        &                 &         &21                &9     &+2    &25       \\
 
    \hline
  \end{tabular}
}
}
  \caption{Results for the hard instances from DIMACS benchmark set}
    %\changeto{Comparison of algorithms on hard DIMACS benchmark set.}
%    \changeto{
%  The table shows the number of seconds needed to compute the provable optimal solution. If the time limit is reached (tl), the table shows the found lower and upper bound provided by the solver. 
%    }
%    {
%  The table shows the found lower and upper bound provided by the solver. 
%  Furthermore it shows the number of seconds needed to compute the provable optimal solution, if this is found
%  within the time limit, otherwise (tl). 
%    }
%The last column shows if the instance could be solved within seconds, minutes or hours, and NP-? denotes the previously unsolved instances according to the Google benchmark website.
  \label{tab:resultsDIMACS}
\end{table}
%\end{sidewaystable}
%\end{minipage}
%}
%
%\begin{table}
%  \centering
%{\small
%\scalebox{0.63}{
%  \begin{tabular}{l|cccc|ccc|ccc}
%    \hline
%    \input{eval/solved_pop_vs_pop2.tex} 
%    \hline
%  \end{tabular}
%}
%}
%  \caption{
%    \red{ TODO: in log datei steht, dass lb(DSJR500.5)=115 ist, also die anderen auch ueberpruefen.}
%    Due to instance size and density \red{the lb and ub of} this instances could not be 
%  computed by (\pop) because of the lack of memory space.}
%  \label{tab:pop2vspop}
%\end{table}
%\clearpage
%
%
The second benchmark set consists of $340$ randomly generated
graphs $G(n,p)$, which have $n$ vertices and an edge between each vertex
pair with probability $p$. These graphs are 
evaluated on computer M2. This set consists of two subsets:
\begin{description}
  \item [set100:] This subset contains 100 instances with 70 vertices: 
	20 instances for each $p=0.1,0.3,0.5,0.7,0.9$.
  \item [sparse240:] This subset contains 240 instances, i.e.
    20 instances for each $n=80,90,100$ and for each $p=0.1,0.15,0.2,0.25$.
\end{description}

% $30$ to $158$ arcs, and $1.5$ arcs per vertex.   ????
%First, a number of vertices was created. Afterwards,
%for each vertex, a random number of outgoing edges
%(with arbitrary target) is created with probability xx.
%such that the overall number of arcs is $1.5$ times the number of vertices.

\subsection{Experimental evaluation}

Table \ref{tab:resultsDIMACS} shows the results for 
hard DIMACS benchmark instances 
%according to \url{https://sites.google.com/site/graphcoloring/vertex-coloring}
for the ILP models 
(\pop), (\pophybrid), (\ass), (\rep) 
  and the state-of-the-art algorithms 
  \cite{MenZab06,MenZab08} and
  \cite{Malaguti2011}.
The table
contains 36 of the 68 hard instances, which can be solved by at least 
one of the considered algorithms.
Columns 1-3 show the instance names and sizes.
Column 4 describes the hardness
of the instances according to the Google site \cite{GOOGLE:BENCHMARK}.
Columns 5--19 display the lower and upper bounds as well as the 
running times of  the simple models (\rep), (\pop), (\pophybrid) and (\ass) 
that have been obtained within a time limit of one hour by the ILP-solver.  
An entry tl indicates that the time limit is reached.
The times are provided in seconds for solving the reduced ILPs after prepocessing. 
%As already mentioned in subsection \ref{sse:preprocessing},
The preprocessing is done with python %, which is 40-50 times slower than c++, 
and took
%The preprocessing times were  
only a few seconds for most instances and not more than one
minute. 
Columns 5--7 show the results of (\rep) with preprocessings (a)--(d).
The model (\rep) does not need (b) directly, but indirectly due to (d).
Columns 8--19 show appropriate results for (\pop), (\pophybrid) and (\ass).
Since (e) can reduce the number of assignment and POP variables, 
we implemented 
(\ass), (\pop)  and (\pophybrid) also with (e) instead of (c).
For (\pop) and (\pophybrid) it turned out that (e) is better than (c).
Due to space restrictions we present the results corresponding to (e) only.
% of (\pop) and (\pophybrid) with (e) only.
For the (\ass) model we show the results for both versions (c) and (e).
The bold items in the table 
highlight the running times of the simple models that are significantly faster compared to the others.

Columns 20 and 21 are obtained from \cite{MenZab06,MenZab08}
and show the running times of two (\ass)-based branch-and-cut algorithms 
suggested by Mendez-Diaz and Zabala. The column 21 (\cite{MenZab08}) 
contains only the additional solved instances, i.e. the instances which have not been
solved in \cite{MenZab06} (column 20).
Column 22 is obtained from \cite{Malaguti2011} 
and shows the running times of a (COV)-based branch-and-price algorithm suggested 
by Malaguti et al. \cite{Malaguti2011}.
  Notice that
  the comparison of running times is not quite fair, 
  %since the used computers are not same:
  since \cite{MenZab08} and \cite{Malaguti2011} report  the Benchmark \cite{DIMACS:dfmax} user time for the instance r500.5 as
  24 s and 7 s, respectively, while our machine M1 needs 4.14 s.
  However,
  \cite{MenZab06,MenZab08} and \cite{Malaguti2011} used 2 h and 10 h as time limit
  respectively, while we only used 1 h.
  Nevertheless, it is interesting to see the number of solved instances by each algorithm.
  This is displayed in the last row of the table.
 % Due to space restrictions 
%  the Table contains only 36 of 68 instances,
%which can be solved at least 
%by one of the compared algorithms.
%
%
%
%Table \ref{tab:resultsDIMACS} shows the results for 
%56 DIMACS benchmark instances for the ILP models (ASS), (POP), (POP2), and (REP). The remaining 12 instances (compare to Table \ref{tab:benchmarkinstances}) could not be solved because of memory restrictions. For the (ASS) model with constraints (2) -- (6) we show the results for two versions depending on the used preprocessing rule (c) or (e) as described above.
%

We can see that the hybrid model (POP2) and \cite{Malaguti2011} solved the highest number (25 out of 68) instances
to provable optimality.
Notice that (\pophybrid) used 1 hour as time limit, 
while \cite{Malaguti2011} needed more than 5 hours
for one instance
with a machine which is approximatively 1.7 times slower than M1.
The models (\ass)+(c), (\ass)+(e), (\pop), (\rep) have solved 22, 21, 19, 13 instances respectively,
while the algorithms \cite{MenZab06,MenZab08} only solved 11 (9+2) instances.
It seems that the hybrid model (\pophybrid) combines the advantages of 
the pure (\pop) model and both assignment models. 
A closer look at the single instances shows that it solved all instances
which are solved either by (\pop) or (\ass)+(c) or (\ass)+(e).
%
%A closer look at the running times shows that those of the (POP) model are often faster than that of (POP2).
%However, for denser instances, the running times of the hybrid model are faster. 
As indicated in our theoretical analysis
of the two models (POP) and (POP2), the hybrid model dominates the pure model for denser instances.
%
%Some of the dense instances could be solved by (POP2) but not by (POP) due to memory restrictions (see Table \ref{tab:pop2vspop}).
%
%We can see that the hybrid model (POP2) and was able to solve 25 out of the 68 
%instances to provable optimality within 1 h.
%The algorithm \cite{Malaguti2011} has also solved 25 instances but one of them 
%needs over 5 h running time.
%
%while the other models were able to solve less instances (13, 19, 21 and 22).
%A closer look at the running times shows that those of the (POP) model are often faster than that of (POP2).
%However, for denser instances, the running times of the hybrid model are faster. As indicated in our theoretical analysis
%of the two models (POP) and (POP2), the hybrid model dominates the pure model for denser instances.
%Some of the dense instances could be solved by (POP2) but not by (POP) due to memory restrictions (see Table \ref{tab:pop2vspop}).
%
The behaviour of the two different versions of the (ASS) model shows a larger variation in running times: sometimes one of the versions is much faster than the others, while on other instances it is much slower.
%\red{ The table also shows in bold the running times of those instances that have been 
%classified as previously unsolved by the Google benchmark website \cite{GOOGLE:BENCHMARK}.
%}

%In order to study the behaviour of the implemented simple models for instances with varying size and density, we
%used the benchmark sets \emph{set100} and \emph{sparse240}.
%%\todoADD{
%%  ++Since ASS+(c) solved more instances from DIMACS benchmark set than ASS+(e), we ignored here the latter and considered
%%only the algorithms \rep, \pop, \pop2, \ass+(c).  
%%}
%Figure \ref{fig:gnp_n70} shows the results of \emph{set100}, the instances with 70 vertices and densities 10, 30, 50, 70, and 90. The model (POP) was able to solve all 20 instances of densities 10 and 30, while (POP2) and (ASS) already start to have problems. From density 50 on, the (POP) model seems to have more problems than the other models. The (ASS) and the (POP2) model deliver similar quality with similar running times for the denser graphs. From density 70 on, the (REP) model clearly dominates all other models, which was to be expected.

\begin{table}[bt]
\centering
{
\small
\scalebox{0.63}{
  %\setlength{\tabcolsep}{4.0pt} 
  %\begin{tabular}{l|rr|l|c|ccc|ccc|c|c}
  \setlength{\tabcolsep}{2.6pt} 
  \begin{tabular}{l|rr|l|c|ccc|ccc|ccc|ccc|ccc|c|c}
    \hline
    %	    &      &       &      & Time     &      &\pop  &        &       &\pophybrid  &        &          &      \\
%instance    &$|V|$ &$|E|$  &class & limit[s] &lb    &ub    &time    &lb     &ub          &time    &old lb    &old ub\\
%\hline                              
%will199GPIA &701   &6772   &NP-s  & 3600     &7     &7     &6.68    &7      &7           &11.73   &   &	   \\
%ash331GPIA  &662   &4181   &NP-s  & 3600     &4     &4     &11.54   &4      &4           &15.26   &   &      \\   
%ash608GPIA  &1216  &7844   &NP-m  & 3600     &4     &4     &34.84   &4      &4           &51.63   &   &\\
%ash958GPIA  &1916  &12506  &NP-m  & 3600     &4     &4     &90.11   &4      &4           &105.77  &   &\\
%abb313GPIA  &1555  &53356  &NP-?  & 7200     &8     &10    &tl      &\textbf{9} &10      &tl      &8\cite{MenZab06} &9\cite{Malaguti2011} \\ 
%\hline

	    &      &       &      & Time  &          &\rep  &      &      &\pop  &        &       &\pophybrid  &        &        &\ass+(c)  &        &        &\ass+(e)  &	&          &      \\
instance    &$|V|$ &$|E|$  &class & limit[s] &lb        &ub    &time  &lb    &ub    &time    &lb     &ub          &time    &lb      &ub        &time    &lb      &ub        &time    &old lb    &old ub\\
\hline
will199GPIA &701   &6772   &NP-s  & 3600 &-$\infty$ &660       &tl    &7     &7     &6.68	  &7      &7           &11.73   &7       &7         &6.81    &7       &7         &7.23	  &   &	   \\
ash331GPIA  &662   &4181   &NP-s  & 3600 &-$\infty$ &661       &tl    &4     &4     &11.54	  &4      &4           &15.26   &4       &4         &3.29    &4       &4         &75.06   &    &      \\   
ash608GPIA  &1216  &7844   &NP-m  & 3600 &-$\infty$ &1215      &tl    &4     &4     &34.84   &4      &4           &51.63   &4       &4         &575.23  &4       &4         &821.74  &   &\\
ash958GPIA  &1916  &12506  &NP-m  & 3600 &-$\infty$ &+$\infty$ &tl    &4     &4     &90.11   &4      &4           &105.77  &4       &6         &tl      &4       &8         &tl      &   &\\
abb313GPIA  &1555  &53356  &NP-?  & 7200 &-$\infty$ &853       &tl    &8     &10    &tl      &\textbf{9} &10      &tl      &8       &12        &tl      &8       &12	 &tl      &8\cite{MenZab06} &9\cite{Malaguti2011} \\ 
\hline

% pop2
%will199GPIA.col  701  6772  660  5836  6  10  7.0  7.0  11.73  18.16
%ash331GPIA.col   662  4181  661  4180  3  8   4.0  4.0  15.26  18.70
% pop 
%will199GPIA.col  701  6772  660  5836  6  10  7.0  7.0  6.68   13.25
%ash331GPIA.col   662  4181  661  4180  3  8   4.0  4.0  11.54  14.94

% ass+c
%will199GPIA.col  701  6772  660  5836  6  10  7.0  7.0  6.81  8.77
%ash331GPIA.col   662  4181  661  4180  3  8   4.0  4.0  3.29  4.69
%abb                                           8    12
%% ass+e
%will199GPIA.col  701  6772  660  5836  6  10  7.0  7.0  7.23   13.92
%ash331GPIA.col   662  4181  661  4180  3  8   4.0  4.0  75.06  78.30
%abb                                           8    12
%% rep 
%will199GPIA.col  701  6772  660  5836  6  10  -1e+100  660.0  7200.17  7337.47
%ash331GPIA.col   662  4181  661  4180  3  8   -1e+100  661.0  7200.15  7300.31
%abb                                            -       853    
 
    \hline
  \end{tabular}
}
}
  \caption{Results of the simple models for the \texttt{GPIA} graphs from DIMACS benchmark set}
  \label{tab:resultsDIMACS:GPIA}
\end{table}

%

%\newpage
%\red{
%Since (\pop) and (\pophybrid) solved within a time limit of 1 h 
%except 
%\texttt{abb313GPIA}
%all 
%\texttt{GPIA} graphs, which are obtained from a matrix partitioning problem to determine sparse Jacobian matrices, we decided to evaluate this graph within
%time limit of 2 h. 
%Old lover and upper bound for this graph is in last two columns of the table.
%As far as we know (\pophybrid) computed best lower bound 9 till now for
%\texttt{abb313GPIA} 
%and solves 
%other four 
%\texttt{GPIA}
%instances within 107 seconds.
%The results are shown on Table \ref{tab:resultsDIMACS:GPIA}.
%(\pophybrid) seems to be best model 
%}
%
%\red{
%Table \ref{tab:resultsDIMACS:GPIA} shows the results for all five DIMACS  
%\texttt{GPIA} graphs, which are obtained from a 
%matrix partitioning problem to determine sparse Jacobian matrices.
%Since (\pop) and (\pophybrid) solved within a time limit of 1 h 
%except 
%\texttt{abb313GPIA}
%all 
%\texttt{GPIA} graphs, we decided to evaluate this graph within
%time limit of 2 h. 
%Old lower and upper bound for this graph is in last two columns of the table.
%As far as we know (\pophybrid) computed best lower bound 9 till now for
%\texttt{abb313GPIA} 
%and solves 
%other four 
%\texttt{GPIA}
%instances within 107 seconds.
%}

%\red{
Table \ref{tab:resultsDIMACS:GPIA} shows the results 
for the simple models
for all five DIMACS  
\texttt{GPIA} graphs %, which are obtained from a matrix partitioning problem to determine sparse Jacobian matrices,
as well as
old lower and upper bounds for the instance 
\texttt{abb313GPIA}.
Since (\pop) and (\pophybrid) solved all 
\texttt{GPIA} graphs except 
\texttt{abb313GPIA} within a time limit of 1 h, we decided to increase the
time limit for this graph to 2 h. 
(\pophybrid) achieved a lower bound of 9 thus improving the best lower bound found. 
To our knowledge, this new model is the first one solving all of the remaining four \texttt{GPIA} graphs. % (within 106 s).  
%}

\begin{figure}[b]
  \centering
  \begin{subfigure}[c]{0.49\textwidth}
    \includegraphics[scale=0.57]{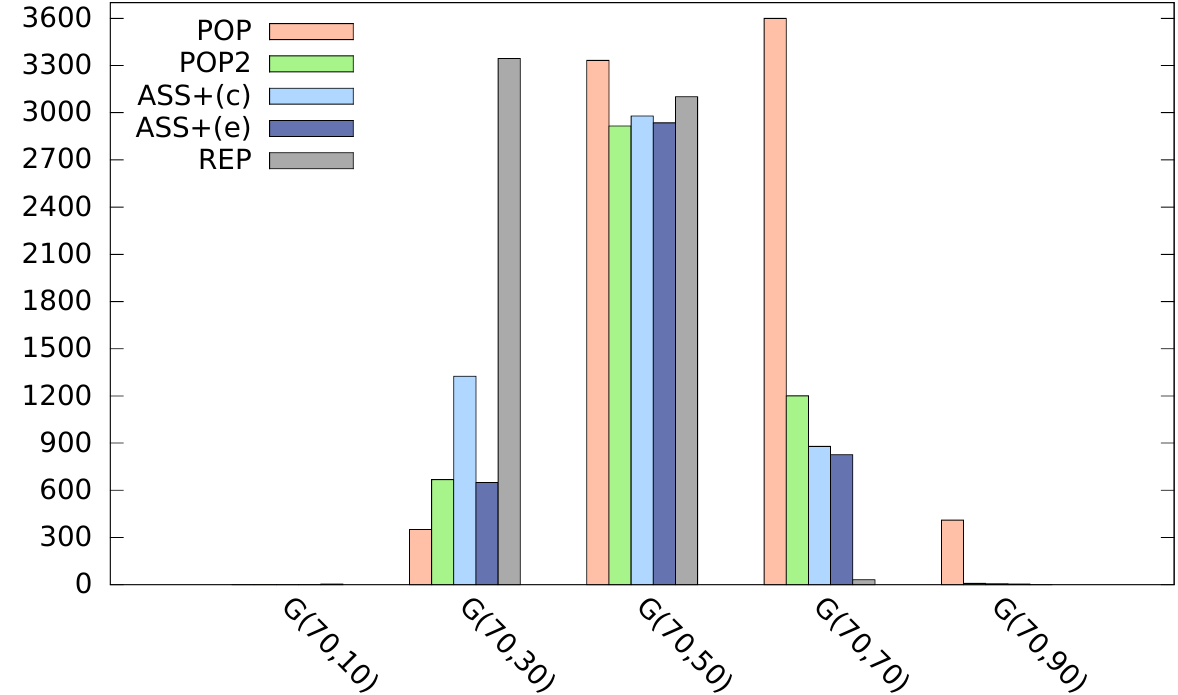}
    %\subcaption{\green{Average Execution Time {[s]} for $G(n,p)$}}
    \subcaption{Average Execution Time {[s]}}
%    \label{fig:gnp_n70:rtmean}
  \end{subfigure}
  \begin{subfigure}[c]{0.49\textwidth}
    \includegraphics[scale=0.57]{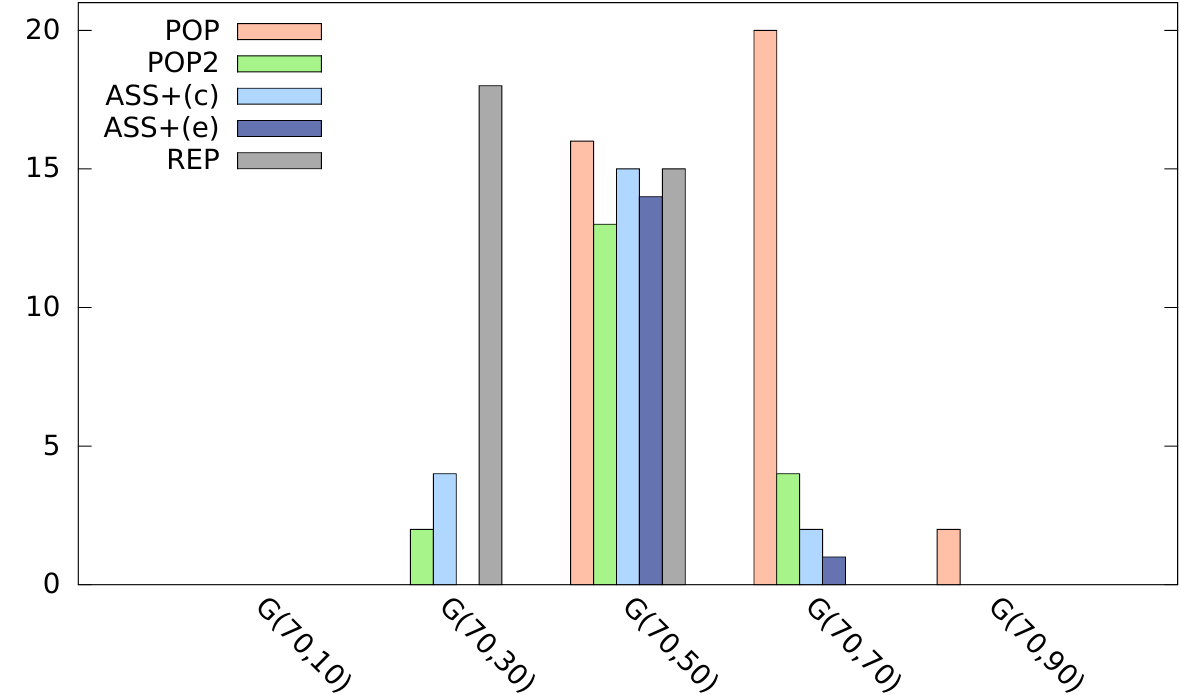}
    \subcaption{Number of unsolved instances}
%    \label{fig:gnp_n70:unsolved}
  \end{subfigure}
  \caption{Comparison of the simple models
  %(\pop), (\pophybrid), (\ass)+(c), (\ass)+(e), (\rep) 
  on the benchmark \emph{set100} 
    }
  \label{fig:gnp_n70}
\end{figure}

\begin{figure}[bt]
\centering
    \begin{subfigure}[c]{0.49\textwidth}
        \includegraphics[scale=0.57]{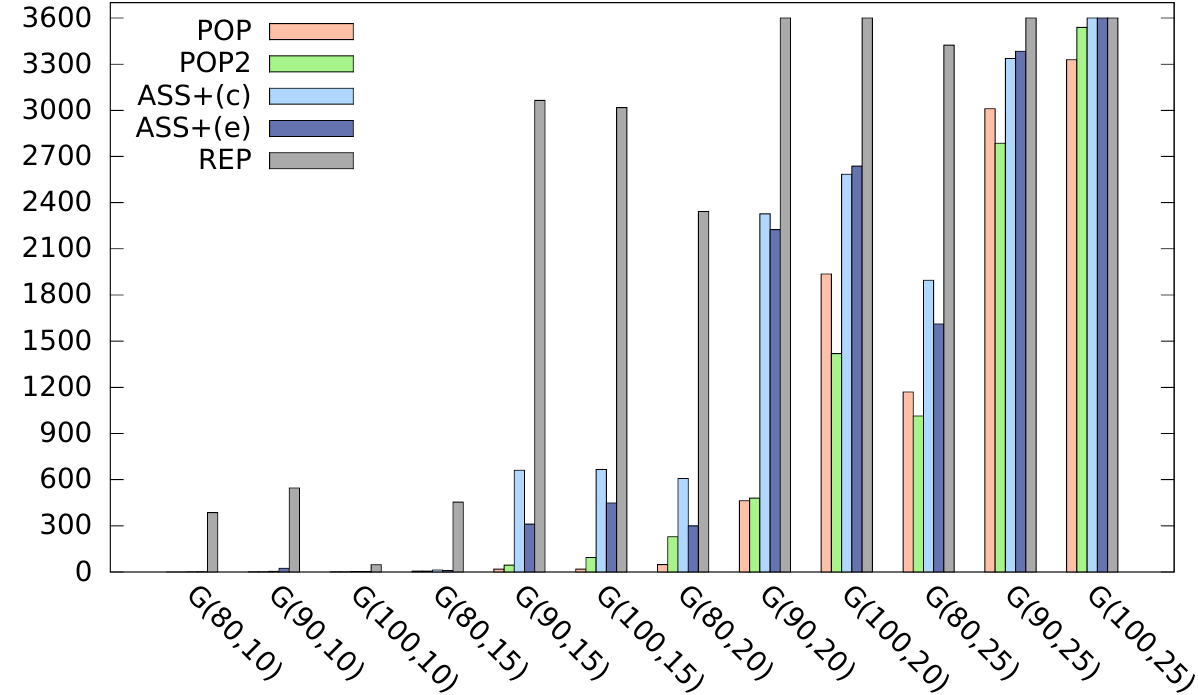}
    %\subcaption{Execution Time {[s]}}
    %\subcaption{\green{Average Execution Time {[s]} for set $G(n,p)$}}
   \subcaption{Average Execution Time {[s]}}
 %   \label{fig:gnp_s240:rtmean}
  \end{subfigure}
  \begin{subfigure}[c]{0.49\textwidth}
    \includegraphics[scale=0.57]{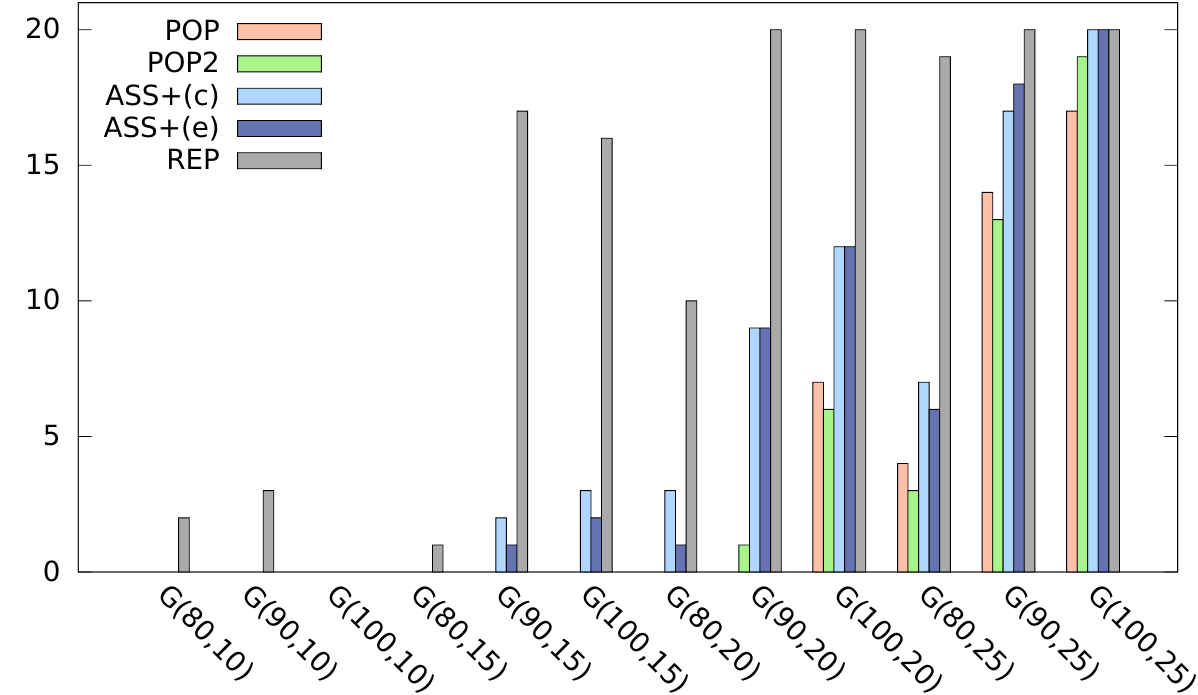}
    \subcaption{Number of unsolved instances}
%    \label{fig:gnp_s240:unsolved}
  \end{subfigure}
  \caption{
    Comparison of the simple models
    %\pop, \pophybrid, \ass+(c), \ass+(e) and \rep\ on the set \emph{sparse240}
  %(\pop), (\pophybrid), (\ass)+(c), (\ass)+(e), (\rep) 
  on the benchmark \emph{sparse240}
  }
  \label{fig:gnp_s240}
\end{figure}
In order to study the behaviour of the implemented simple models for instances with varying size and density, 
we first
used the benchmark \emph{set100}, %. % and \emph{sparse240}.
%\todoADD{
%  ++Since ASS+(c) solved more instances from DIMACS benchmark set than ASS+(e), we ignored here the latter and considered
%only the algorithms \rep, \pop, \pop2, \ass+(c).  
%}
%Figure \ref{fig:gnp_n70} shows the results of \emph{set100}: 
for which the results are displayed in Figure \ref{fig:gnp_n70}.
Figure \ref{fig:gnp_n70}(a) shows the average runtime for each set $G(70,p)$
for each density $p=10,30,50,70,90$, while 
Figure \ref{fig:gnp_n70}(b) shows the number of unsolved instances for each set.
(\pop) and (\ass)+(e) were able to solve all instances of densities 10 and 30,
where the average runtime of (\pop) is about 2 times shorter than that of (\ass)+(e).
From density 50 on, the (\pop) model seems to have more problems than the other models. 
The (\ass)+(c), (\ass)+(e) and the (\pophybrid) model deliver similar quality with similar running times for the denser graphs. From density 70 on, the (REP) model clearly dominates all other models, which was to be expected.
%
%%
%
%Figure  \ref{fig:gnp_s240} shows the results for the larger graphs up to density 25. For the larger instances, the partial-ordering based models (POP) and (POP2) dominate the other models. They are able to solve more instances than (ASS) and (REP). The representative model already gets problems with the 80 vertex instances of density 10.
%Since both partial-ordering based models were the fastest models on sparse graphs of \emph{set100},
%
Since (\pop) was the fastest model on the sparse graphs of \emph{set100},
we decided to evaluate larger sparse graphs and generated the set \emph{sparse240}. 
The corresponding results are shown in Figure  \ref{fig:gnp_s240}. 
For the larger instances, both partial-ordering based models (POP) and (POP2) dominate the other models. % (see Fig \ref{fig:gnp_s240}). 
%Even though (\pop) is the fastest, since it was able to solve all instances til $G(90,20)$ (Fig. \ref{fig:gnp_s240:unsolved}). 
%They are able to solve more instances than (ASS) and (REP). 
The representative model already gets problems with the smallest problem instances in the set.

We conclude our work with answering our questions from the beginning:
%Answering our questions from the beginning, we can say the following:
\begin{itemize}
\item (H1): %Do our new partial-order based ILP formulations dominate the classical assignment ILP model (ASS) on a set of benchmark instances? 
%The hybrid  (POP2) formulation is able to solve more DIMACS instances to provable optimality than the (ASS) model. This is not true for the (POP) model. On the random graphs the situation is similar: (POP2) clearly dominates (ASS). (POP) only dominates (ASS) for the larger sparse instances. 
The model (POP2) is able to solve more DIMACS instances to provable optimality than both assignment models (ASS)+(c) and (ASS)+(e). This is not true for the (POP) model. On the random graphs the situation is similar: (POP2) clearly dominates both (ASS) models. 
(POP) dominates both (ASS) models only on the sparse instances. 
\item (H2): %Does one of the two (POP)-based models dominate the other one?
%The (POP2) model clearly dominates the original model (POP) on the tested instances. 
%\red{
The (POP2) model dominates the original model (POP) 
on the harder DIMACS instances as well as on the tested dense random graphs.
%} 
The explanation lies in the fact discussed in section \ref{subsec:hybridMIP}. It seems that the (POP2) model combines the advantages of (POP) which is better for sparse graphs and (ASS) which is better for dense graphs.
\item (H3): %How do the simple models behave compared to the best state-of-the-art algorithms on a benchmark set of graphs?
The simple models are able to solve very hard instances from the DIMACS benchmark set. A comparison with the computational results of the state-of-the-art algorithms (such as \cite{MehTri96, Hansen2009, dimacs96,Sew96, Malaguti2011, Held2012, Malaguti2010, Segundo2012,MenZab08,MenZab06}) shows that the quality of the suggested algorithms is about the same (also see Table \ref{tab:resultsDIMACS} and \ref{tab:resultsDIMACS:GPIA}). Some of the approaches are able to solve some of the instances faster, but they are slower on other instances.
\item (H4): %Does the model (REP) dominate the other approaches on dense instances? 
The representatives model does clearly dominate the other models on dense instances. This can be seen on the denser instances of the DIMACS graphs and on the series of random graphs with increasing density.
\end{itemize}

%\section{Conclusions}

%We have suggested two new ILP formulations for the vertex coloring problem. Our computational experiments have shown that one of the new models, (POP2), dominates the classical (ASS) model on many of the tested benchmark instances. 
%As it was expected, the representative formulation dominates the other models on dense graphs but cannot compete against the other models.

%\subparagraph*{Acknowledgements.}

%This work was supported by the DFG under the project \emph{Compact Graph Drawing with Port Constraints} (DFG MU 1129/10-1).

%\appendix
%\section{Morbi eros magna}
% xxxx

%%
%% Bibliography
%%

%% Either use bibtex (recommended), 

\bibliography{coloring-pop}

%% .. or use the thebibliography environment explicitely

\end{document}